\documentclass[prl,aps,floatfix,superscriptaddress,notitlepage,twocolumn]{revtex4-1}
\usepackage[utf8]{inputenc}

\usepackage{fullpage} 
\usepackage{tikz} 
\usepackage{amsmath,amsfonts,amsthm, mathtools, mathrsfs}
\usepackage{float}
\usepackage{dsfont}
\usepackage{csquotes}
\usepackage{amssymb}
\usepackage{verbatim}
\usepackage{enumerate}

\usepackage{graphicx}
\usepackage[leftcaption]{sidecap}

\usepackage{hyperref}
\definecolor{darkred}  {rgb}{0.5,0,0}
\definecolor{darkblue} {rgb}{0,0,0.5}
\definecolor{darkgreen}{rgb}{0,0.5,0}
\hypersetup{
  colorlinks = true,
  urlcolor  = blue,         
  linkcolor = darkblue,     
  citecolor = darkgreen,    
  filecolor = darkred       
}

\theoremstyle{definition}
\newtheorem*{corollary}{Corollary}
\newtheorem{definition}{Definition}

\newtheorem{lemma}{Lemma}
\newtheorem{proposition}{Proposition}
\newtheorem{theorem}{Theorem}

\newcommand{\mc}{\mathcal}

\newcommand{\1}{\mathds{1}}

\newcommand{\one}{\mathds{1}}

\definecolor{cool_green}{rgb}{0.0, 0.5, 0.0}
\definecolor{cool_blue}{rgb}{0.0, 0.0, 0.5}


\newcommand{\Tr}[1]{\operatorname{Tr}\!\left[#1\right]}
\def\>{\rangle}
\def\<{\langle}

\def\mE{\mathcal{E}}

\def\sH{\mathcal{H}}

\newcommand{\set}[1]{\mathcal{#1}}

\newcommand{\pguess}{P_{\operatorname{guess}}}

\renewcommand{\qedsymbol}{\nobreak \ifvmode \relax \else
	\ifdim \lastskip<1.5em \hskip-\lastskip \hskip1.5em plus0em
	minus0.5em \fi \nobreak \vrule height0.75em width0.5em
	depth0.25em\fi}

\renewcommand{\ge}{\geqslant}
\renewcommand{\le}{\leqslant}

\begin{document}

\title{A Complete Resource Theory of Quantum Incompatibility\\as Quantum Programmability}
\author{Francesco Buscemi}
\email{buscemi@i.nagoya-u.ac.jp}
\affiliation{Graduate  School  of  Informatics,  Nagoya
	University, Chikusa-ku, 464-8601 Nagoya, Japan}
\author{Eric Chitambar}
\email{echitamb@illinois.edu}
\affiliation{Department of Electrical and Computer Engineering, Coordinated Science Laboratory, University of Illinois at Urbana-Champaign, Urbana, IL 61801}
\author{Wenbin Zhou}
\email{zhou.wenbin@i.mbox.nagoya-u.ac.jp}
\affiliation{Graduate  School  of  Informatics,  Nagoya
	University, Chikusa-ku, 464-8601 Nagoya, Japan}
\date{\today}

\begin{abstract}
Measurement incompatibility describes two or more quantum measurements whose expected joint outcome on a given system cannot be defined.  This purely non-classical phenomenon provides a necessary ingredient in many quantum information tasks such violating a Bell Inequality or nonlocally steering part of an entangled state.  In this paper, we characterize incompatibility in terms of programmable measurement devices and the general notion of quantum programmability.  This refers to the temporal freedom a user has in issuing programs to a quantum device.  For devices with a classical control and classical output, measurement incompatibility emerges as the essential quantum resource embodied in their functioning.  Based on the processing of programmable measurement devices, we construct a quantum resource theory of incompatibility.  A complete set of convertibility conditions for programmable devices is derived based on quantum state discrimination with post-measurement information.  
 
\end{abstract}

\maketitle

The theory and practice of quantum measurement is a topic that sits at the foundation of quantum mechanics.  Unlike its classical counterpart, quantum measurement offers a variety of ways to probe a system and extract classical information.  A highly non-classical feature that emerges in quantum mechanics is measurement incompatibility.  The most general quantum measurements are described by positive-operator valued measures (POVMs), and incompatibility of POVMs is typically defined in terms of joint measurability \cite{Lahti-2003a, Heinosaari-2008a, Heinosaari2016}.  Roughly speaking, a family of POVMs is called jointly measurable if the outcomes of the constituent POVMs can be simulated through the measurement of a single ``mother'' POVM.

There has been much interest in measurement incompatibility and its relationship to various primitive tasks in quantum information theory~\cite{Heinosaari2016}.  For the demonstration of quantum nonlocality, it is not difficult to see that a Bell Inequality can be violated only if incompatible measurements are employed by each of the parties involved in the experiment \cite{Fine-1982a}.  While for certain families of measurements the converse is true \cite{Wolf-2009a}, only  recently has it been found not to hold in general \cite{Quintino-2016a, Bene-2018a}.  However, this asymmetry between measurement incompatibility and nonlocality vanishes when considering the more general task of quantum steering.  That is, a family of POVMs is incompatible if and only if it can be used to steer some quantum state in a non-classical way \cite{Uola-2014a, Quintino-2014a}.  In recent works, it was also shown that a family of POVMs is incompatible if and only if it offers an advantage in some state discrimination game \cite{Carmeli2018,Carmeli-2019a,Uola-2019a,Mori-2019}.  The main result of this letter offers a generalization of these results.

Given the ability of incompatible measurements to generate non-classical effects and enhance quantum state discrimination tasks, it becomes natural to view measurement incompatibility as a resource in quantum information processing.  This interpretation can be formalized using the framework of a quantum resource theory (QRT) \cite{Chitambar-2019a}.  In general, a QRT isolates some particular feature of a quantum system referred to as a resource, such as entanglement or coherence, and studies how this resource transforms under a restricted set of ``free'' operations; crucially, the free operations cannot generate the resource on their own.  While entanglement and coherence represent static resources that are commonly studied in the literature, it is also possible to formulate resource theories for dynamic resources such as certain families of quantum measurements \cite{Oszmaniec-2017a, Takagi-2019a, Oszmaniec-2019a, Designolle2019}.  

In particular, resource theories of measurement incompatibility have been previously proposed in which the resources are incompatible families of POVMs \cite{Heinosaari-2015a, Guerini-2017a, Skrzypczyk-2019a}.  However, a drawback to these approaches is that the free operations identified are not large enough to fully capture the notion of measurement incompatibility in an operational way.  Ref. \cite{Heinosaari-2015a} only considers measurement convertibility under quantum pre-processing while Refs. \cite{Guerini-2017a, Skrzypczyk-2019a} only consider conditional classical post-processing as the free operations.  Both of these on their own are too weak in that do not allow for the free convertibility of one compatible POVM family to another.  Moreover, there is no \textit{a priori} reason why an experimenter should be restricted to performing either just quantum pre- or classical post-processing when their combination is equally unable to generate incompatibility.

In this letter, we construct a resource theory of measurement incompatibility that combines both quantum pre-processing and conditional classical post-processing in the context of programmable measurement devices (PMDs).  PMDs are objects that emerge through the following consideration.  In any experiment where different measurements are being employed, there are two relevant systems: the quantum system $Q$ that is subjected to the particular measurement and the ``program'' system whose state $x\in\mc{X}$ represents the choice of measurement.  The measurement apparatus in such an experiment thus exemplifies a PMD since the type of measurement it performs depends on the program it receives.
 
To formulate a resource theory in this setting, we shift the primary focus away from quantum measurement and place it on \textit{programmability}, which we consider broadly to be any sort of classical control over a device that can be implemented at the programmer's discretion.  In other words, we envision programmability to mean that some device can be obtained at time $t_0$ and then controlled in whatever way the device allows at some later time $t$.  This reflects the natural interplay between computing hardware and software: one first purchases or builds a computing device and then later programs it to perform whatever computational task is desired.  However, adopting such a perspective then requires constraining the type of interaction between the program and quantum system described in the previous paragraph.  Namely, the program system should not be allowed to affect the preparation of the quantum system since the former is decided at time $t$ while the latter is set at time $t_0<t$.  In satisfying this restriction, we are thus lead to a resource theory of programmability for which the free operations arise from very natural physical considerations.

Let us now put the discussion in more formal terms.  
\begin{definition}[Programmable Measurement Devices]\label{def:PMD}
	A (classically) \textit{programmable measurement device} (PMD) is a collection of POVMs on the same Hilbert space $\sH^Q$, $\{M^Q(a|x):a\in\mc{A},x\in\mc{X}\}$ such that $M^Q(a|x)\geq 0$ and $\sum_a M^Q(a|x)=\1^Q$ for all $x$. The set $\mc{X}$ is interpreted as the program set (an element $x$ being the program), while the set $\mc{A}$ is interpreted as the outcome set.
\end{definition}
\noindent 
While PMDs are mathematically equivalent to $\text{cq}\to \text{c}$ channels, the two inputs of a PMD are always assumed to be \textit{separate} systems.  Crucially, we assume that it takes a finite amount of time for the program to be able to influence the measurement performed on the quantum system.

This assumption immediately implies a necessary condition for a PMD to be able to implement an incompatible family of POVMs: the PMD must be able to effectively preserve the quantum system at least for the time it takes the program to influence the measurement process.  This simple observation leads us to define the free objects in our QRT as those PMDs corresponding to compatible families of POVMs.
\begin{definition}[Simple PMDs, alias Compatible POVMs]\label{def:simple}
A PMD $M^Q(a|x)$ is called \textit{simple} if its constituting POVMs can be written as
\begin{equation}
\label{Eq:simple-PMD}
M^Q(a|x)=\sum_{i\in \mc{I}}p(a|i,x) \tilde{M}^Q(i),
\end{equation}
where the $\tilde{M}^Q(i)$ are elements of a single POVM (sometime referred to as the ``mother'' POVM), and $p(a|i,x)$ is a conditional probability distribution.
\end{definition}

Compatible POVMs are also often defined in terms of coarse-graining over a single POVM, and Eq.~\eqref{Eq:simple-PMD} is equivalent to this characterization (see, e.g., Ref.~\cite{Guerini-2017a}).  From their definition, simple PMDs can be perfectly simulated without the need to store the quantum system, which can be immediately measured using the mother POVM, with the program influencing only the classical post-processing of its outcome.

Let us then turn to the free operations in this resource theory, which will be a restricted set of maps converting $\text{cq}\to \text{c}$ channels to $\text{cq}\to \text{c}$.  Such maps convert channel $\mc{M}$ into $\mc{F}_{\text{post}}\circ\mc{M}\circ\mc{E}_{\text{pre}}$, where $\mc{E}_{\text{pre}}$ and $\mc{F}_{\text{post}}$ are pre- and post-processing maps, possibly connected by a memory side channel \cite{Chiribella-2008a, Gour-2019a}.  Every non-simple PMD functions as a quantum memory as it must preserve the quantum system until the program arrives.  Quantum memory, then, is essential for a device to be programmable, and so it should not be something freely available in a resource theory of programmability.  The memory connecting $\mc{E}_{\text{pre}}$ and $\mc{F}_{\text{post}}$ should therefore be classical, and the pre-processing map $\mc{E}_{\text{pre}}$ should be causally independent of the program since at that time the program has not arrived yet.  What remains are the free operations of this QRT, and they are described by Eq. \eqref{eq:strategies} in the following definition (see Fig.~\ref{Fig:PMD-temporal-processing} for a schematic representation).

\begin{definition}\label{def:PMD-proc}
	Given two PMDs $M^Q(a|x)$ and $N^{Q'}(b|y)$ on $\sH^Q$ and $\sH^{Q'}$ respectively, we write $M^Q(a|x)\succeq N^{Q'}(b|y)$ whenever
	\begin{align}\label{eq:strategies}
		N^{Q'}(b|y)=\sum_r&\mu(r)\sum_{i,x,a}q(b|a,x,i,y,r)\times\notag\\
		&p(x|i,y,r)
		(\mE^{Q'\to Q}_{i|r})^\dagger [M^Q(a|x)]\;,
	\end{align}
	where (i) $\mu(r)$ is a probability distribution modeling a shared source of classical randomness, (ii) $\{\mE^{Q'\to Q}_{i|r} \}$ is a family of quantum instruments labeled by $r$, with classical outcome $i$, and $\mc{E}^\dagger$ denotes the adjoint (i.e., trace-dual) map of $\mc{E}$, and (iii) $p(x|i,y,r)$ and $q(b|a,x,i,y,r)$ are classical noisy channels (conditional probability distributions).  The relation $M^Q(a|x)\succeq N^{Q'}(b|y)$ expresses convertibility of PMDs by free operations in this QRT.
\end{definition}
	
	\begin{figure}[t]
	\centering
	\includegraphics[width=\columnwidth]{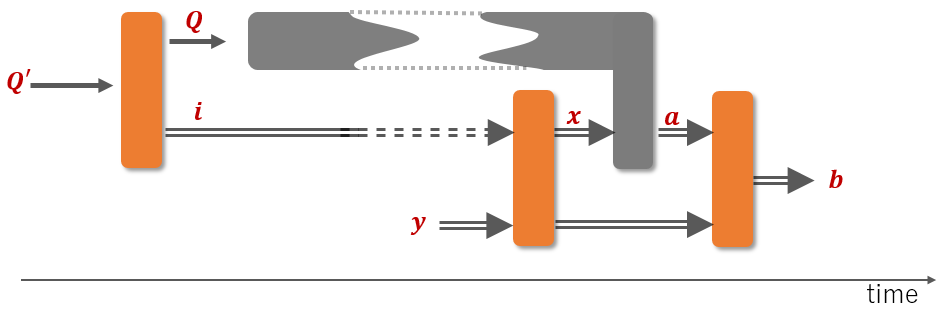}
			\caption{PMDs processing, according to Definition~\ref{def:PMD-proc}. Time flows from left to right. The program $y$ (i.e., the post information) arrives after the pre-processing has been performed.  Since the only quantum memory resides in the PMD, the quantum input must be committed to the PMD until the program arrives.  On the other hand, the classical output $i$ of the pre-processing instrument can be stored in a classical memory and interact with the program before it reaches the PMD.  Notice that, even though it is not explicitly depicted in the picture, classical randomness can be shared between all processing boxes (orange on-line), so that the set of possible processings is convex.}
		\label{Fig:PMD-temporal-processing}
	\end{figure}

Before proceeding further, we stress that the free operations considered here need not constitute the \textit{only} meaningful operational framework to study the properties of programmability and compatibility.  However, as shown in the Supplemental Material, they do satisfy the important property that any two simple PMDs can always be freely interconverted.

We refer to Fig. \ref{Fig:PMD-temporal-processing} as the temporal model of PMD processing, and there is an alternative spatial model that characterizes PMDs in terms of bipartite channels shared between two spatially separated parties (Alice and Bob).  As shown in Fig. \ref{fig:povm-processing}, the programmability of a PMD is then translated into a no-signaling constraint from Bob to Alice.  Hence the correct operational setting for PMD processing in the spatial model is one-way LOCC, and the following proposition makes this connection precise.
\begin{proposition}
$M^Q(a|x)\succeq N^{Q'}(b|y)$ if and only if $M^Q(a|x)$ can be converted to $N^{Q'}(b|y)$ by a one-way LOCC from Alice to Bob.
\end{proposition}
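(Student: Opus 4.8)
The plan is to fix the spatial bipartition implicit in the statement and then match a general one-way LOCC protocol against Eq.~\eqref{eq:strategies} term by term. I assign to Alice the quantum input port of the device and to Bob the classical program port together with the classical outcome (cf.\ the bipartite picture of Fig.~\ref{fig:povm-processing}). With this split, the marginal accessible to Alice is independent of Bob's program choice exactly because the quantum system is committed before the program is issued; this is the no-signaling constraint from Bob to Alice, which is equivalent to forbidding any classical communication from Bob to Alice and thus restricts the allowed protocols to one-way LOCC from Alice to Bob.

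For the ``only if'' direction I read the protocol directly off Eq.~\eqref{eq:strategies}. The distribution $\mu(r)$ is free shared randomness; Alice's local operation is the instrument $\{\mE^{Q'\to Q}_{i|r}\}$ applied to the target input $Q'$, whose quantum output is fed into the device and whose classical outcome $i$ is sent to Bob; Bob then selects the device program through $p(x|i,y,r)$, reads the outcome $a$, and post-processes it through $q(b|a,x,i,y,r)$ to produce $b$. Running this against an arbitrary input $\rho^{Q'}$ and using $\Tr{M^Q(a|x)\,\mE^{Q'\to Q}_{i|r}(\rho^{Q'})}=\Tr{(\mE^{Q'\to Q}_{i|r})^\dagger[M^Q(a|x)]\,\rho^{Q'}}$ reproduces $\Tr{N^{Q'}(b|y)\,\rho^{Q'}}$ for every $\rho^{Q'}$, so the protocol realizes $N^{Q'}(b|y)$, and hence $M\succeq N$ implies convertibility by one-way LOCC.

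The ``if'' direction is the substantive one, and the main obstacle is showing that the most general one-way LOCC protocol collapses to the canonical form of Eq.~\eqref{eq:strategies}. I would use three reductions. First, because Bob never communicates back, all of Alice's actions are causally prior to and independent of Bob's and can be merged into a single instrument on $Q'$; since Bob receives only classical data and the target output $b$ is classical, any quantum system Alice retains is never acted on again and may be traced out, so without loss of generality her operation is an instrument $\{\mE^{Q'\to Q}_{i|r}\}$ feeding the device with classical message $i$. Second, all of Bob's interfaces---the device's program and outcome, Alice's message, and the final output---are classical, so any quantum ancilla he uses contributes only local randomness absorbed into $\mu(r)$; with the device used a single time, his operation reduces to a stochastic program choice $p(x|i,y,r)$ followed by a stochastic relabeling $q(b|a,x,i,y,r)$. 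Third, multiple messages from Alice and all shared randomness are gathered into the single index $i$ and the distribution $\mu(r)$. Assembling these reductions yields exactly Eq.~\eqref{eq:strategies}, establishing the converse and completing the equivalence.
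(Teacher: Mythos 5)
Your proposal is correct and follows essentially the same route as the paper's proof: the forward direction is read off Eq.~\eqref{eq:strategies} directly, and the converse reduces a general one-way LOCC protocol to that canonical form by noting that Alice receives no output from the PMD (so her post-processing and forward communication merge into her pre-processing instrument) and that all of Bob's interfaces are classical (so his operations collapse to the stochastic maps $p$ and $q$). Your write-up is simply a more detailed version of the paper's two-sentence argument.
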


\noindent We stress that while the bipartite processing of PMDs by one-way LOCC is intuitively simple, without the temporal model in mind, the physical motivation for studying the QRT of $\text{cq}\to \text{c}$ channels under one-way LOCC is less clear.  Why is one-way LOCC the free set of operations in such a QRT, and why must it only be from Alice to Bob?  The answers come from the allowed operations in the temporal model, which do have clear physical motivation in terms of programmability.  It just so happens that these free operations correspond to Alice-to-Bob one-way LOCC in the spatial model.

\begin{figure}[t]
	\centering
	\includegraphics[width=\columnwidth]{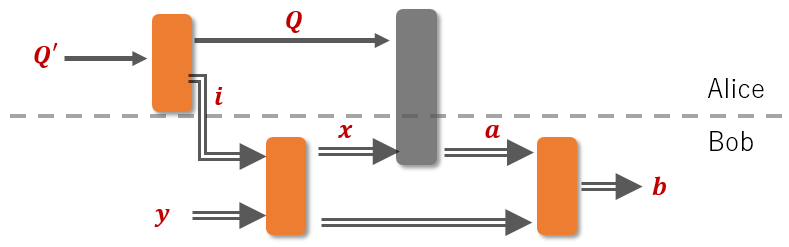}
	\caption{The spatial model of PMD processing.  The quantum and program inputs are separated between Alice and Bob, and the free operations depicted in Fig. \ref{Fig:PMD-temporal-processing} translate into one-way LOCC maps from Alice to Bob.}
	\label{fig:povm-processing}
\end{figure}

\textit{PMDs and Post-Information Guessing Games.}  The main result of this paper is a characterization of free PDM convertibility in terms of quantum state guessing games with side information \cite{Ballester-2008a,Carmeli-2019a,Skrzypczyk-2019a,Uola-2019a}.  These games involve a referee who distributes to the player a quantum state and some classical side information, information which we will henceforth refer to as ``post information'' since it more appropriately fits our temporal model.  More formally, let $\{\rho^R_{w,z}:w\in\mc{W},z\in\mc{Z} \}$ be a two-index quantum ensemble such that $p(w,z):=\Tr{\rho^R_{w,z}}$ is a normalized joint probability distribution.   A post-information guessing game consists of the following components: (i) the referee picks one pair $(w,z)\in\mc{W}\times\mc{Z}$ at random according to the distribution $p(w,z)$, (ii) the normalized quantum state $p(w,z)^{-1}\rho^R_{w,z}$ is sent to the player followed, after some finite time, by the index $w$, and  (iii) the player attempts to maximize the probability of correctly guessing the value $z$ using the given PMD $M^Q(a|x)$ and any free processing described in Definition \ref{def:PMD-proc}.	In this game, the label $w$ is interpreted as the ``post information'' since it is imported into the program register of the PMD \textit{after} the quantum state, and it cannot be used in any pre-processing of the PMD.

When playing guessing games with post information, certain processing strategies will lead to greater success probabilities in guessing $z$.  In particular, if the referee's questions $\rho^R_{w,z}$ are encoded on a quantum system that is different from the quantum input of the PMD $M^Q(a|x)$, then the player must do some sort of quantum pre-processing of $R$ into $Q$, represented without loss of generality by a quantum instrument $\{\mc{E}_i^{R\to Q}\}$.  The optimum success probability over all strategies is thus given by
\begin{widetext}
\begin{equation}\label{eq:optimal-guess}
\pguess(M^Q(a|x);\rho^R_{w,z}):=\max_{\mu,q,p,\mE}\sum_{w,z}\sum_r\sum_{i,x,a}\mu(r)q(z|a,w,i,r)p(x|w,i,r)\ \Tr{\mE^{R\to Q}_{i|r}(\rho^R_{w,z})\ M^Q(a|x)}\;,
\end{equation}
\end{widetext}
where the probability distribution $\mu(r)$ is included to describe mixed strategies, i.e. those in which a different strategy, labeled by $r$, is chosen at random. (The optimum guessing probability will then be achieved on pure strategies, but it is convenient to explicitly include this in Eq.~(\ref{eq:optimal-guess}).)

We are now ready to state the main result, whose proof, which closely follows those in~\cite{Buscemi2016} and~\cite{Buscemi2017}, is given in the Supplemental Material.
\begin{theorem}\label{th:main}
	Given two PMDs $M^Q(a|x)$ and $N^{Q'}(b|y)$, the following are equivalent:
	\begin{enumerate}[(a)]
		\item $M^Q(a|x)\succeq N^{Q'}(b|y)$;
		\item for all guessing games with post-information $\{\rho^R_{w,z}:w\in\mc{W},z\in\mc{Z} \}$, \[\pguess(M^Q(a|x);\rho^R_{w,z})\ge \pguess(N^{Q'}(b|y);\rho^R_{w,z})\;.\]
	\end{enumerate}
	In~(b), it is possible to consider only guessing games with $\sH^R=\sH^{Q'}$, $\set{W}=\set{Y}$, and $\set{Z}=\set{B}$.
\end{theorem}

Simply by noticing that it is impossible to turn a simple PMD into an incompatible one by means of free operations, we obtain as a corollary that quantum incompatibility can always be witnessed by means of a suitable guessing game with post-information.

\begin{corollary}
	A PMD $M^Q(a|x)$ is incompatible, if and only if there exists an ensemble $\{\rho^Q_{x,a}: x\in\mc{X},a\in\mc{A} \}$ such that
	\[
	\sum_{a,x}\Tr{M^Q(a|x)\ \rho^Q_{x,a}}>\pguess^{\text{simple}}(\rho^Q_{x,a})\;,
	\]
	where $\pguess^{\text{simple}}(\rho^Q_{x,a})$ is defined as the optimum guessing probability achievable with simple PMDs.
\end{corollary}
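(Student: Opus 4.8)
The plan is to deduce the corollary from Theorem~\ref{th:main} by specializing its target $N^{Q'}(b|y)$ to a simple PMD. The key structural observation is that, for any simple PMD $S$, one has $S\succeq M^Q(a|x)$ \emph{if and only if} $M^Q(a|x)$ is itself simple: the ``if'' direction follows because any two simple PMDs are freely interconvertible (as noted above Fig.~\ref{fig:povm-processing}), so a simple $M$ is reachable from $S$; the ``only if'' direction is precisely the fact that free operations cannot turn a simple PMD into an incompatible one. The same interconvertibility, fed through Theorem~\ref{th:main}, shows that $\pguess(S;\cdot)$ is independent of which simple $S$ we pick, so $\pguess^{\text{simple}}(\cdot)=\pguess(S;\cdot)$. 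Taking contrapositives in Theorem~\ref{th:main} then yields
\[
M^Q(a|x)\text{ is incompatible}\iff \exists\ \text{game}:\ \pguess(M^Q(a|x);\cdot)>\pguess^{\text{simple}}(\cdot).
\]
By the final clause of Theorem~\ref{th:main}, the witnessing game can be taken with $\sH^R=\sH^{Q}$, $\mc{W}=\mc{X}$, and $\mc{Z}=\mc{A}$, i.e.\ an ensemble of the form $\{\rho^Q_{x,a}\}$.

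It then remains to pass from the \emph{optimal} guessing probability $\pguess(M^Q(a|x);\rho^Q_{x,a})$ to the \emph{bare} pairing $\sum_{a,x}\Tr{M^Q(a|x)\,\rho^Q_{x,a}}$ that appears in the statement. First I would fix an optimal pure strategy in Eq.~\eqref{eq:optimal-guess} — a pre-processing instrument $\{\mE_i\}$ together with the classical channels $p$ and $q$ — and absorb it into a new ensemble $\tilde\rho^Q_{x,a}:=\sum_{x',a',i}q(a'|a,x',i)\,p(x|x',i)\,\mE_i(\rho^Q_{x',a'})$ (the overall normalization is harmless, since every quantity here is homogeneous of degree one in the ensemble). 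By construction the bare pairing of $M$ against $\tilde\rho$ reproduces the optimal value, $\sum_{a,x}\Tr{M^Q(a|x)\,\tilde\rho^Q_{x,a}}=\pguess(M^Q(a|x);\rho^Q_{x,a})$, so $\tilde\rho$ is the candidate witness.

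The main obstacle is the benchmark side: I must check that this ensemble transformation cannot help simple devices, i.e.\ $\pguess^{\text{simple}}(\tilde\rho)\le \pguess^{\text{simple}}(\rho)$. The plan is to take any optimal simple strategy for $\tilde\rho$ — a single POVM $\{G_k\}$ followed by a decoder $d(a|k,x)$ — and push the pre-processing through by its adjoint: $\{G_k\}$ becomes the POVM $\{\mE_i^\dagger(G_k)\}_{i,k}$ (which resolves the identity because $\sum_i\mE_i$ is trace-preserving), while $p$, $q$, $d$ fold into a new decoder that one checks is still normalized, exhibiting a simple strategy on $\rho$ of equal value. Granting this, the strict chain $\sum_{a,x}\Tr{M^Q(a|x)\,\tilde\rho^Q_{x,a}}=\pguess(M;\rho)>\pguess^{\text{simple}}(\rho)\ge\pguess^{\text{simple}}(\tilde\rho)$ identifies $\tilde\rho$ as the desired witness and settles one direction. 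The converse is immediate: the bare pairing is itself a legitimate (``natural'') strategy, so any ensemble with $\sum_{a,x}\Tr{M^Q(a|x)\,\rho^Q_{x,a}}>\pguess^{\text{simple}}(\rho)$ forces $\pguess(M;\rho)>\pguess(S;\rho)$, whence $S\not\succeq M$ and $M$ cannot be simple.
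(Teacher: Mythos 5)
Your argument is correct, but it reaches the bare pairing $\sum_{a,x}\Tr{M^Q(a|x)\,\rho^Q_{x,a}}$ by a genuinely different route than the paper. The paper does not use Theorem~\ref{th:main} as a black box: it invokes the intermediate inequality \eqref{eq:almost-final-passage} established inside the theorem's proof, which says that $S\succeq M$ holds if and only if, for every game, the \emph{unprocessed} pairing of the target device satisfies $\sum_{a,x}\Tr{M^Q(a|x)\,\rho_{x,a}}\le\pguess(S;\rho_{x,a})$. Taking $S$ simple and contraposing (together with the fact that free operations preserve simplicity and that all simple PMDs are equivalent, so $\pguess(S;\cdot)=\pguess^{\text{simple}}(\cdot)$) immediately yields the corollary with the bare pairing already in place. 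You instead first obtain the statement with $\pguess(M;\cdot)$ on the left and then close the gap by absorbing an optimal strategy into a transformed ensemble $\tilde\rho$, which forces you to prove the auxiliary monotonicity $\pguess^{\text{simple}}(\tilde\rho)\le\pguess^{\text{simple}}(\rho)$ via the adjoint-instrument/decoder-folding argument. That step is sound (the pushed-through operators $\mE_i^\dagger(G_k)$ do resolve the identity, the folded decoder is normalized, and homogeneity handles the rescaling of $\tilde\rho$), so your proof goes through; it is simply longer and re-derives, in the special case of a simple source device, what \eqref{eq:almost-final-passage} already provides in general. The trade-off is that your version is self-contained relative to the \emph{statement} of Theorem~\ref{th:main}, whereas the paper's version depends on reopening its proof.
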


As a special case, Theorem \ref{th:main} provides necessary and sufficient conditions of a single POVM under quantum pre-processing and conditional post-processing.  That is, we have $M^Q(a)\succ N^{Q'}(b)$ in the sense of Eq. \eqref{eq:strategies} if and only if the POVM $M^Q(a)$ is always more useful than  $N^{Q'}(b)$ for the task of minimum-error state discrimination, i.e. for every ensemble $\{\rho_z\}_z$ we have $\pguess(M^{Q}(b);\rho^{R}_{z})\geq\pguess(N^{Q'}(b);\rho^{R}_{z})$.

\textit{Robustness of non-Simple PMDs.}  In any QRT with minimal structure, it is possible to define a (generalized) robustness measure of resource \cite{Brandao-2015a}.  Roughly speaking, the robustness captures how tolerant an object is to mixing before it loses all its resource.  A PMD robustness measure $\mathfrak{R}( \{M(a|x)\}_{a,x})$ can also be defined in this QRT directly analogous to incompatibility robustness measures previously studied \cite{Uola-2019a, Skrzypczyk-2019a}.  Specifically, we have
\begin{align*}
&\mathfrak{R}( \{M(a|x)\}_{a,x})\\
&=\min\left\{r\geq 0:\frac{M(a|x) + rN(a|x)}{1+r} \in \mathcal{F}\right\}\;,
\end{align*}
where $\mathcal{F}$ is the convex, compact set of simple PMDs matching input and output spaces of $M(a|x)$.  In the above corollary, we showed that every incompatible PMD has an advantage over incompatible ones in some guessing game with post-information.  This advantage can also be quantified as the maximum ratio between the optimum guessing probability of the given incompatible PMD versus the optimal guess probability of any simple PMD,
\[
\max_{\{\rho^Q_{x,a} \}} \frac{\pguess(M^Q(a|x);\rho^Q_{x,a})}{\pguess^{\text{simple}}(\rho^Q_{x,a})}\;.
\]
It is possible to show an equivalence between the advantage and the robustness. Namely, 
\begin{align*}
&1+\mathfrak{R}( \{M(a|x)\}_{a,x})\\
&=\max_{\{\rho^Q_{x,a} \}} \frac{\pguess(M^Q(a|x);\rho^Q_{x,a})}{\pguess^{\text{simple}}(\rho^Q_{x,a})}\;,
\end{align*}
where the maximization is over all possible guessing games with post-information, mathematically represented by a double-index ensemble $\{\rho^Q_{x,a}: x\in\mc{X},a\in\mc{A} \}$. The proof follows in the same way as the proof of Theorem 2 in~\cite{PhysRevX.9.031053}, and one can check for details in the supplementary document.  This establishes an operational interpretation of $\mathfrak{R}( \{M(a|x)\}_{a,x})$ in terms of guessing games with post-information.

\textit{Conclusion.}  In this letter we have shown that a resource theory of quantum incompatibility can be naturally formulated as a resource theory of programmability, and that new insights can be gained by doing so. In particular, this resource theory is complete in the sense that all free devices are naturally equivalent to each other.  This was accomplished by identifying programmability as a key resource that requires quantum memory for its realization.  From this perspective, both quantum pre-processing and classical conditional post-processing can be integrated into the picture, while remaining, however, within the operational scenario provided by post-information guessing games~\cite{Carmeli2018}.

The approach that we followed here in order to formulate a resource theory of quantum incompatibility is very much inspired by the concept of statistical comparison, introduced in mathematical statistics chiefly by Blackwell~\cite{Blackwell-1953} and extended to the quantum case by one of the present authors~\cite{Buscemi2012}. Indeed, the aim of statistical comparison, as originally envisaged by Blackwell, is that of expressing the possibility of transforming an initial statistical model into another one, in terms of the utility that the two statistical models provide in operationally motivated scenarios (that is, statistical decision problems in Blackwell's original paper). \textit{Mutatis mutandis}, this is exactly the scope of any resource theory, where the aim is to identify a set of operationally motivated monotones that dictate when an allowed transformation between resources exists or not. Among the numerous examples of such an approach, which at present ranges from quantum nonlocality~\cite{Buscemi2012} to quantum thermodynamics~\cite{Gour2018}, the present work bears some similarities with the resource theory of quantum memories, viz. non entanglement-breaking channels, recently put forth in Ref.~\cite{Rosset2018}. Even though no program register is considered in~\cite{Rosset2018}, there, as it happens here, the quantum memory is probed by means of ``timed'' decision problems, in which two tokens of the problem (there, two quantum tokens; here, one token is classical) are given to the player at subsequent times, which is then asked to formulate an educated guess so to maximize the expected payoff. Further relations between the two frameworks are left for future research.

\begin{acknowledgments}
We thank Teiko Heinosaari, Gilad Gour, and Otfried G\"{u}hne for helpful discussions on measurement incompatibility. W.Z. thanks Ryuji Takagi and Bartosz Regula for pointing out some mistakes while writing the proof of robustness. F.B. acknowledges partial  support  from  the  Japan Society for the Promotion of Science (JSPS) KAKENHI,  Grant  No.19H04066.  E.C. acknowledges the support of NSF Award No. 1914440. W.Z. is supported by the Program (RWDC) for Leading Graduate Schools of Nagoya University.
\end{acknowledgments}

\bibliographystyle{alphaurl}
\bibliography{programmability}

\appendix
\newpage
\onecolumngrid

\section{Convexity of Simple PMDs}

Recall the definition of simple PMDs.
\begin{definition}[Simple PMDs, \textit{alias} Compatible POVMs]\label{def:simple}
	A PMD $M^Q(a|x)$ is called \textit{simple} if its constituting POVMs can be written as
	\begin{equation}
	\label{Eq:simple-PMD}
	M^Q(a|x)=\sum_{i\in \mc{I}}p(a|i,x) \tilde{M}^Q(i),
	\end{equation}
	where the $\tilde{M}^Q(i)$ are elements of a single POVM (sometime referred to as the ``mother'' POVM), and $p(a|i,x)$ is a conditional probability distribution.
\end{definition}
\noindent As noted in the body of the letter, any convex mixing of simple PMDs can be directly incorporated into the ``mother'' POVM.  We now describe this in a bit more detail.  Suppose that $M^Q(a|x)$ admits a decomposition of the form
\[
M^Q(a|x)=\sum_r\mu(r)\sum_{i\in \mc{I}}p(a|i,x,r) \tilde{M}^Q(i|r),
\]
where $\mu(r)$ is a probability distribution and $\tilde{M}^Q(i|r)$ is now a \textit{family} of POVMs indexed by the shared random index $r$. Then, simply by noticing that $\mu(r)\tilde{M}^Q(i|r)$ is itself a normalized two-outcome indexed POVM, it is possible to conclude that Definition~\ref{def:simple} is fully general and no further random variables are needed.

\section{Free Convertibility among Simple PMDs}

Recall that $M^Q(a|x)\succeq N^{Q'}(b|y)$ whenever
\begin{align}\label{eq:strategies}
N^{Q'}(b|y)= \sum_r\mu(r)\sum_{i,x,a}q(b|a,x,i,y,r)p(x|i,y,r)(\mE^{Q'\to Q}_{i|r})^\dagger [M^Q(a|x)]\;.
\end{align}

\begin{lemma}\label{lemma:simple-all-equiv}
	All simple devices are free, that is, given any two simple devices $M^Q(a|x)$ and $N^{Q'}(b|y)$, possibly defined on different Hilbert spaces $\sH^Q$ and $\sH^{Q'}$, both relations holds:
	\[
	M^Q(a|x)\succeq N^{Q'}(b|y)\quad\text{and}\quad N^{Q'}(b|y)\succeq M^{Q}(a|x)\;.
	\]
\end{lemma}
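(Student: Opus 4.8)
The plan is to reduce the lemma to the single stronger claim that for \emph{any} PMD $M^Q(a|x)$---simple or not---and any simple PMD $N^{Q'}(b|y)$, one has $M^Q(a|x)\succeq N^{Q'}(b|y)$. Both relations asserted in the lemma then follow immediately: $M\succeq N$ is the special case just stated, and $N\succeq M$ is obtained by interchanging the two devices (legitimate since both are simple, so each qualifies as the target). The guiding observation is that a simple target $N^{Q'}(b|y)=\sum_j p(b|j,y)\,\tilde{N}^{Q'}(j)$ is fully specified by its mother POVM $\{\tilde{N}^{Q'}(j)\}_j$ together with a classical post-processing, and that the mother POVM can be realized directly by the \emph{pre-processing} instrument of Eq.~\eqref{eq:strategies}, whose classical register is then used to transport its outcome. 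The source device $M$ will turn out to be inessential, contributing only a trivial normalization.

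Concretely, I would take $\mu$ to be a point mass (so the index $r$ may be suppressed), fix an arbitrary state $\sigma^Q$ on $\sH^Q$, and choose the measure-and-prepare instrument
\begin{equation*}
\mE^{Q'\to Q}_{i}(\,\cdot\,)=\Tr{\tilde{N}^{Q'}(i)\,(\,\cdot\,)}\;\sigma^Q .
\end{equation*}
This is a bona fide quantum instrument: each branch is completely positive, and $\sum_i\mE^{Q'\to Q}_i(\rho)=\Tr{\rho}\,\sigma^Q$ is trace preserving precisely because $\sum_i\tilde{N}^{Q'}(i)=\one^{Q'}$. Its trace-dual acts as $(\mE^{Q'\to Q}_{i})^\dagger[X^Q]=\Tr{\sigma^Q X^Q}\,\tilde{N}^{Q'}(i)$. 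For the classical channels I would set $q(b|a,x,i,y)=p(b|i,y)$, i.e.\ exactly the post-processing of $N$ with no dependence on $a$ or $x$, and let $p(x|i,y)$ be any fixed conditional distribution.

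Substituting these choices into Eq.~\eqref{eq:strategies} yields
\begin{equation*}
N^{Q'}(b|y)=\sum_{i}p(b|i,y)\,\tilde{N}^{Q'}(i)\Big(\sum_{x}p(x|i,y)\sum_{a}\Tr{\sigma^Q M^Q(a|x)}\Big).
\end{equation*}
Here the innermost sum collapses the source device: since $M$ is a PMD, $\sum_a M^Q(a|x)=\one^Q$ forces $\sum_a\Tr{\sigma^Q M^Q(a|x)}=\Tr{\sigma^Q\,\one^Q}=1$, after which $\sum_x p(x|i,y)=1$. What remains is exactly $\sum_i p(b|i,y)\,\tilde{N}^{Q'}(i)=N^{Q'}(b|y)$, establishing $M\succeq N$; interchanging the roles of $M$ and $N$ then gives $N\succeq M$.

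I do not anticipate a genuine obstacle here, as the construction is fully explicit. The only points demanding care are verifying that the measure-and-prepare map is a legitimate instrument---its trace preservation hinges on the mother POVM of $N$ summing to the identity---and checking the final normalization, which reduces to the two elementary sums displayed above. The conceptual takeaway is that \emph{simplicity} of $N$ means its program controls classical post-processing only, so the pre-processing stage of a free operation already has enough power to synthesize any simple device from scratch, irrespective of the source $M$.
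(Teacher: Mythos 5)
Your proof is correct, and it establishes the slightly stronger statement that \emph{any} PMD (simple or not) can be freely converted into any simple PMD, from which the lemma follows by symmetry. The paper reaches the same conclusion but packages the argument differently: it factors the conversion through the \emph{trivial} PMD $I$ (one-element alphabets, one-dimensional Hilbert space), noting first that $M\succeq I$ trivially, and then that with $I$ as the source the pre-processing instruments degenerate into POVMs $E^{Q'}(i|r)$ whose outcomes, classically post-processed, sweep out exactly the set of simple PMDs; the claim then follows from composability of free operations. Your route avoids the intermediate device and the implicit appeal to transitivity of $\succeq$ by exhibiting the composite map in one shot: the measure-and-prepare instrument $\mE^{Q'\to Q}_{i}(\cdot)=\Tr{\tilde{N}^{Q'}(i)\,(\cdot)}\,\sigma^Q$ carries the mother-POVM outcome $i$ through the classical side channel, the source $M$ is discarded via $\sum_a M^Q(a|x)=\one^Q$, and the post-processing $q(b|a,x,i,y)=p(b|i,y)$ reconstructs $N$. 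The two arguments rest on the same insight---that the classical outcome of the pre-processing instrument suffices to synthesize any simple device---so the difference is one of presentation: the paper's version makes the role of the trivial PMD as a canonical free object explicit, while yours is more self-contained and makes the normalization bookkeeping fully visible.
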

\begin{proof}
	For any two simple PMDs.  Let us denote by $I^Q(a|x)$ the trivial PMD, i.e. the PMD with alphabets $\set{A}=\set{X}=\{0 \}$ and Hilbert space $\sH^Q=\mathbb{C}$.  Clearly the trivial PMD can be attained from any other using the free operations.  Showing that the converse is true will complete the proof of the lemma.  Using the trivial PMD as the input PMD in Eq. \eqref{eq:strategies}, we see that the instruments $\{\mE^{Q'\to Q}_{i|r} \}$ are, in fact, POVMs $E^{Q'}(i|r)$: this is so because $\dim\sH^Q=1$. Since these POVMs can be freely chosen, all devices of the form
	\[
	\begin{split}
	N^{Q'}(b|y)&=\sum_r\mu(r)\sum_{i,j}q(b|j,r)p(j|i,y,r)\ E^{Q'}(i|r)\\
	&=\sum_r\sum_{i}p'(b|i,y,r)\ E^{Q'}(i,r)\;,
	\end{split}
	\]
	can be obtained from the trivial PMD, where $E^{Q'}(i,r):=\mu(r)E^{Q'}(i|r)$ is considered now as a POVM with two outcome indices. Since the above coincides with the definition of simple PMDs, the desired conclusion is reached.	
\end{proof}

\section{Proof of Proposition 1}

\begin{proposition}
	$M^Q(a|x)\succeq N^{Q'}(b|y)$ if and only if $M^Q(a|x)$ can be converted to $N^{Q'}(b|y)$ by a one-way LOCC from Alice to Bob.
\end{proposition}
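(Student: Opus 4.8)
The plan is to make the spatial (bipartite) representation of a PMD fully explicit and then verify, by direct matching, that the free operations of Eq.~\eqref{eq:strategies} coincide with one-way LOCC maps from Alice to Bob acting on this representation. To the PMD $M^Q(a|x)$ I associate the bipartite channel in which Alice holds the quantum input register $Q$, while Bob holds the classical program register $X$ (an input) and the classical outcome register $A$ (an output); concretely, $M$ acts as $\rho_Q\otimes\op{x}{x}_X\mapsto\sum_a\Tr{M^Q(a|x)\,\rho_Q}\op{a}{a}_A$, with every classical register placed on Bob's side. Programmability — that the quantum system be committed before the program arrives — becomes the statement that no classical information may flow from Bob, who holds the program, to Alice, who handles the quantum pre-processing. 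This is exactly the no-signaling-from-Bob-to-Alice constraint, and it is what singles out one-way LOCC from Alice to Bob as the free class of bipartite operations.

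For the ``only if'' direction I would begin from an arbitrary one-way LOCC $A\to B$ protocol converting $M^Q(a|x)$ into $N^{Q'}(b|y)$ and reduce it to a normal form. Allowing shared classical randomness $r$ (with distribution $\mu(r)$) renders the set of such protocols convex and lets me fix a deterministic protocol for each $r$. Because no message ever returns from Bob, Alice's entire local contribution can be compressed into a single operation on her new input $Q'$: since she has no output register downstream of the PMD, her most general local operation that also emits a classical message $i$ to Bob is a quantum instrument $\{\mE^{Q'\to Q}_{i|r}\}$, mapping $Q'$ into the register $Q$ fed to $M$ and announcing $i$. Crucially, since Alice never receives the program, this instrument is independent of $y$ — which is precisely the no-signaling constraint. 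Bob, holding only the classical registers $Y$, $X$, $A$ together with the received message $i$ and the shared bit $r$, can implement the most general classical operations: a pre-processing stochastic map $p(x|i,y,r)$ that prepares the program $x$ for $M$, and a post-processing stochastic map $q(b|a,x,i,y,r)$ that turns the outcome $a$ into $b$ (the retained dependence on $x,i,y,r$ encoding Bob's classical memory between the two steps). Passing to the Heisenberg picture, so that pre-processing acts on the POVM elements through the adjoint instrument $(\mE^{Q'\to Q}_{i|r})^\dagger$, and summing over $r,i,x,a$ then reproduces Eq.~\eqref{eq:strategies} verbatim.

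For the ``if'' direction I would simply read a protocol off Eq.~\eqref{eq:strategies}: sample $r\sim\mu$; Alice applies $\{\mE^{Q'\to Q}_{i|r}\}$ to $Q'$, forwards the quantum output to the PMD and the classical label $i$ to Bob; Bob draws $x$ according to $p(x|i,y,r)$ and feeds it as the program; upon receiving the outcome $a$, Bob outputs $b$ according to $q(b|a,x,i,y,r)$. Every step is a local quantum operation on Alice's side, a local classical operation on Bob's side, a use of shared randomness, or a single Alice-to-Bob transmission of $i$, so the protocol is by construction one-way LOCC from Alice to Bob and manifestly realizes the transformation of Eq.~\eqref{eq:strategies}.

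The hard part will be the normal-form step in the ``only if'' direction: justifying that an arbitrary one-way LOCC $A\to B$ supermap collapses into exactly one quantum instrument on Alice's side followed by purely classical pre- and post-processing on Bob's side. This rests on two claims I would argue carefully — first, that the absence of any return channel lets all of Alice's adaptive rounds (and any retained ancilla, which is ultimately discarded) be absorbed into a single instrument whose classical transcript is $i$; and second, that because every register Bob touches ($X$, $A$, $i$, and $r$) is classical, his most general allowed manipulation is a pair of stochastic maps linked by classical memory, with no genuinely quantum operation available to him. Once these are established, matching the resulting general form against Eq.~\eqref{eq:strategies} is immediate.
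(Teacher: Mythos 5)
Your proposal is correct and follows essentially the same route as the paper's proof: one direction reads a one-way LOCC protocol directly off Eq.~\eqref{eq:strategies}, while the other collapses an arbitrary Alice-to-Bob one-way LOCC protocol into a normal form---a single quantum instrument on Alice's side (possible precisely because she receives no return message from Bob or output from the PMD) together with classical stochastic pre- and post-processing on Bob's side (since all his registers are classical)---which then matches Eq.~\eqref{eq:strategies} term by term, exactly as the paper argues when it absorbs Alice's post-processing and forward communication into her pre-processing. The only quibble is that your ``if''/``only if'' labels are swapped relative to the proposition's phrasing, but both directions are present and correct in substance.
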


\begin{proof}
	It is obvious from Fig. 1 in the main text that $M^Q(a|x)\succeq N^{Q'}(b|y)$ implies an implementation by one-way LOCC.  Conversely, every one-way LOCC protocol from Alice to Bob consists here of (i) a one-way LOCC pre-processing, (ii) local side channels that are quantum for Alice and classical for Bob, and (iii) one-way LOCC post-processing.  Since Alice receives no output from the PMD, any local post-processing and forward communication she performs can be included in her pre-processing.  What remains is exactly as depicted in Fig. 2.
\end{proof}

\section{Proof of Theorem 1 and its Corollary}

Recall the optimal success probability of post-information guessing games:
\begin{equation}\label{eq:optimal-guess}
\pguess(M^Q(a|x);\rho^R_{w,z}):=\max_{\mu,q,p,\mE}\sum_{w,z}\sum_r\sum_{i,x,a}\mu(r)q(z|a,w,i,r)p(x|w,i,r)\ \Tr{\mE^{R\to Q}_{i|r}(\rho^R_{w,z})\ M^Q(a|x)}\;.
\end{equation}

\begin{theorem}\label{th:main}
	Given two PMDs $M^Q(a|x)$ and $N^{Q'}(b|y)$, the following are equivalent:
	\begin{enumerate}[(a)]
		\item $M^Q(a|x)\succeq N^{Q'}(b|y)$;
		\item for all guessing games with post-information $\{\rho^R_{w,z}:w\in\mc{W},z\in\mc{Z} \}$, \[\pguess(M^Q(a|x);\rho^R_{w,z})\ge \pguess(N^{Q'}(b|y);\rho^R_{w,z})\;.\]
	\end{enumerate}
	In~(b), it is possible to consider only guessing games with $\sH^R=\sH^{Q'}$, $\set{W}=\set{Y}$, and $\set{Z}=\set{B}$.
\end{theorem}

\begin{proof}
	For the sake of notation, we will denote the processing of a PMD $M^Q(a|x)$ as prescribed in Eq.~(\ref{eq:strategies}) simply by
	\[
	[\mathcal{T}(M)](b|y)\;.
	\]
	In particular, the set of all allowed mappings of PMDs with input Hilbert space $\sH^Q$, input alphabet $\mc{X}$, and output alphabet $\mc{A}$, into PMDs with input Hilbert space $\sH^{Q'}$, input alphabet $\mc{Y}$, and output alphabet $\mc{B}$, will be denoted by $\mathscr{T}$:
	\[
	\mathscr{T}:=\{\mc{T}:\text{PMD}(\sH^Q,\mc{X},\mc{A})\to\text{PMD}(\sH^{Q'},\mc{Y},\mc{B}) \}\;.
	\]
	A crucial observation is that the set $\mathscr{T}$ is convex due to the presence of shared randomness (represented by the probability distribution $\mu(r)$ in Eq. \eqref{eq:strategies}).
	
	The implication (a)$\implies$(b) is trivial: since processings of the form~(Eq. \eqref{eq:strategies}) are always allowed when playing guessing games with post-information, as prescribed in~ Eq. \eqref{eq:optimal-guess}), if PMD $M^Q(a|x)$ can simulate $N^{Q'}(b|y)$, then any strategy that can be reached from the latter can obviously be reached also from the former. Hence, we only need to prove explicitly the implication (b)$\implies$(a).
	
	We begin by noticing that condition~(a) is equivalent to the existence of a mapping $\mc{T}$ of the form~(\ref{eq:strategies}) such that
	\begin{equation}\label{eq:cond-a}
	[\mc{T}(M)]^{Q'}(b|y)=N^{Q'}(b|y),\qquad\forall b,\forall y\;.
	\end{equation} 
	
	Let us fix a basis of self-adjoint operators $\{X_j^{Q'} :j\in\mc{J}\}$. Then, relation~(\ref{eq:cond-a}) is equivalent to the following:
	\[
	\Tr{N^{Q'}(b|y)\ X_j^{Q'}}=\Tr{[\mc{T}(M)]^{Q'}(b|y)\ X_j^{Q'} }\;,\qquad\forall b,\forall y,\forall j\;.
	\]
	Denote by $\vec{s}(N)$ the vector whose entries are the $|\set{B}|\times|\set{Y}|\times|\set{J}|$ real numbers above, and by $\vec{r}(M,\mathcal{T})$ the same vector on the right-hand side.
	
	Let us consider now the set of all such vectors that can be obtained from PMD $M^Q(a|x)$ by varying the processing $\mathcal{T}$ in $\mathscr{T}$; denote such set by
	\[
	\mathscr{S}(M):=\{\vec{r}(M,\mathcal{T}):\mathcal{T}\in\mathscr{T} \}\;.
	\]
	Such a set is closed and convex, because closed and convex is the set of all transformations $\mathscr{T}$. Hence, we can say that relation~(\ref{eq:cond-a}) is equivalent to
	\[
	\vec{s}(N)\in \mathscr{S}(M)\;,
	\]
	that is, by applying the separation theorem for convex sets,
	\[
	\vec{s}(N)\cdot\vec{c}\le\max_{\vec{r}\in \mathscr{S}(M)}\vec{r}\cdot\vec{c}\;,\qquad\forall\vec{c}\in\mathbb{R}^{|\set{B}|\times|\set{Y}|\times|\set{J}|}\;.
	\]
	
	Denoting by $Y_{b,y}^{Q'}$ the self-adjoint operators obtained as $Y_{b,y}^{Q'}:=\sum_jc(b,y,j)X_j^{Q'}$, we have that relation~(\ref{eq:cond-a}) is equivalent to
	\[
	\sum_{b,y}\Tr{N^{Q'}(b|y)\ Y_{b,y}^{Q'}}\le\max_{\mathcal{T}\in\mathscr{T}}\sum_{b,y}\Tr{[\mathcal{T}(M)]^{Q'}(b|y)\ Y^{Q'}_{b,y}}\;,\qquad\forall\{Y^{Q'}_{b,y}:\text{self-adjoint} \}\;.
	\]
	
	We now shift and rescale the self-adjoint operators $Y^{Q'}_{b,y}$ to $\rho^{Q'}_{b,y}:=\frac{Y^{Q'}_{b,y}+C}{\sum_{b,y}\Tr{Y^{Q'}_{b,y}+C}}\ge 0$, so that the $\rho^{Q'}_{b,y}$ form an ensemble. This can always be done by choosing the constant operator $C$ large enough. Then, by noticing that $\sum_{b,y}\Tr{N^{Q'}(b|y)\ C}=|\set{Y}|\Tr{C}$ does not depend on the particular PMD $N^{Q'}(b|y)$, we can rewrite the above equation arriving at the following conclusion: condition~(\ref{eq:cond-a}) is equivalent to
	\[
	\sum_{b,y}\Tr{N^{Q'}(b|y)\ \rho^{Q'}_{b,y}}\le\max_{\mathcal{T}\in\mathscr{T}}\sum_{b,y}\Tr{[\mathcal{T}(M)]^{Q'}(b|y)\ \rho^{Q'}_{b,y}}\;,\qquad\forall\ \text{ensembles}\ \{\rho^{Q'}_{b,y}:b\in\set{B},y\in\set{Y} \}\;.
	\]
	Comparing the above relation with the expression~(\ref{eq:optimal-guess}) of the optimal guessing probability in guessing games with post-information, we recognize that the above equation means that, for any guessing game with post-information $\{\rho^{Q'}_{b,y}:b\in\set{B},y\in\set{Y} \}$, it holds that
	\begin{equation}\label{eq:almost-final-passage}
	\sum_{b,y}\Tr{N^{Q'}(b|y)\ \rho^{Q'}_{b,y}}\le \pguess(M^Q(a|x);\rho^{Q'}_{b,y})\;.
	\end{equation}
	But then, a \textit{sufficient} condition for relation~(\ref{eq:cond-a}) is that
	\[
	\pguess(N^{Q'}(b|y);\rho^{Q'}_{b,y})\le\pguess(M^Q(a|x);\rho^{Q'}_{b,y})\;,
	\]
	for all guessing game with post-information $\{\rho^{Q'}_{b,y}:b\in\set{B},y\in\set{Y} \}$.		
\end{proof}

\noindent\textbf{Proof of Corollary:}

First, we notice that, for any guessing game with post-information, the optimum guessing probability is the same \textit{for all} simple PMDs. This is a direct consequence of Theorem~\ref{th:main} and Lemma~\ref{lemma:simple-all-equiv}.

Then, the statement is proved by contradiction. Suppose that, for all guessing game with post-information $\{\rho^Q_{x,a}: x\in\mc{X},a\in\mc{A} \}$, the opposite relation holds, that is
\[
\sum_{a,x}\Tr{M^Q(a|x)\ \rho^Q_{x,a}}\le\pguess^{\text{simple}}(\rho^Q_{x,a})\;.
\]
But then, by means of Eq.~(\ref{eq:almost-final-passage}) in the proof above, one would conclude that it is possible to obtain $M^Q(a|x)$ by acting with a free operation on a simple PMD, in contradiction with the fact that $M^Q(a|x)$ is incompatible.

\section{Proof of Robustness}
In this section, we prove the connection of robustness and the guessing game scenario. We show that the generalized robustness of PMD is an exact quantifer for the advantage in some guessing games.

Suppose we are given a PMD, $\{M(a|x): a\in \set{A}, x\in \set{X}\}$  on the Hilbert space $\mathcal{H}^Q$, and an ensemble $\{\rho_{a,x}\}$.  According to the theorem in main text, it is possible to restrict all $\rho_{a,x} \in \mathcal{L}(\mathcal{H}^Q) \quad \forall a,x $. 

At first, we define the set of \textit{simple} PMDs as,
\begin{equation}
\set{F}_{Q,\set{A},\set{X}}:=\{ \{M(a|x)\}_{a,x}:\exists \text{ POVM } \{\widetilde{M}^Q(i)\},  p(a|x,i), \text{s.t. }M^Q(a|x)=\sum_{i\in \mc{I}}p(a|i,x) \widetilde{M}^Q(i) \quad \forall a,x\},
\end{equation}

Definition~(\ref{Eq:simple-PMD}) identifies $\set{F}_{Q,\set{A},\set{X}}$ as a collection of POVM familes that is convex and closed. Since it is possible to fix $Q, \set{A}, \set{X}$, in what follows we ignore the subscripts. We define $\mathds{M}:= \{M(a|x)\}_{a,x}$ and in what follows we use the same font style to represent PMDs. We denote by $\mathcal{Z}$ the set of general PMDs, that is to say $\mathcal{F}\subseteq \mathcal{Z}$. 

For usefullness, we define a real vector space $ \mathcal{V}$ as,  
\begin{equation}\label{vector}
\mathcal{V}: =\Bigg\{ \mathds{V}=
\scalebox{0.5}[1]{$\displaystyle
	\left(\scalebox{2}[1]{$\displaystyle
		\begin{array}{cc}
		V_{1}    \\
		\vdots   \\
		V_{d} 
		\end{array}$}
	\right)$} : V_i=V^{\dagger}_i \quad \forall i
\Bigg\}
\end{equation}
on $\mathbb{R}_{+}$, while we define its inner product as $\langle\mathds{A}|\mathds{B}\rangle:=\sum_{i}\langle A_i,B_i\rangle=\langle\mathds{B}|\mathds{A}\rangle$, and the notation $\langle  \cdot,\cdot \rangle $ is defined as the inner product, such that $\langle A,B \rangle:=\Tr{AB}$, and $d=|\set{X}||\set{A}|$. Note that each element in $\set{Z}$ corresponds to a unique vector in $\set{V}$.

Let us first define the convex cone generated by \textit{simple} PMDs as,
\begin{align}
&\mathcal{C}:= \{c\mathds{W}: c \in \mathbb{R}_{+},\mathds{W}\in \set{F}\},
\end{align}
as well as its dual,
\begin{align}
&\mathcal{C}^*:= \{\mathds{E} \in \set{V}: \langle\mathds{E}|\mathds{F}\rangle\geq 0, \forall \mathds{F}\in \set{C}\}.
\end{align}

We then define the generalized robustness of PMD $\mathds{M}$ with respect to $\set{F}$,
\begin{equation}\label{eq:defrb}
\mathfrak{R}(\mathds{M}):=\min \{r \in \mathbb{R}_{+}: \mathds{M}+r\mathds{N}\in\mathcal{C}, \quad \mathds{N}\in \mathcal{Z}\}
\end{equation}
where  $\mathds{M}+r\mathds{N}\in\mathcal{C}$ is equivalent to the fact that the family of $\{M(a|x)+rN(a|x)\}$ considered as a vector defined in~(\ref{vector}) is in set $\set{C}$. 

In order to see the connection between robustness and guessing game, let's us define $\mathds{N}':=r\mathds{N}$, i.e., $N'(a|x)=rN(a|x) \forall a,x$, and define $\mathds{N}' \succeq 0$ the same fashion as, $N'(a|x)\geq 0 \quad \forall a,x$, then we rewrite the definition~(\ref{eq:defrb}) as a conic form problem (which we call primal problem) with generalized inequality $\succeq$, i.e., given $\mathds{M}$, we want: 
\begin{equation}
\begin{array}{ll@{}ll}
\text{minimize}  & \displaystyle\lambda &\\
\text{subject to}& \displaystyle    \mathds{M}+\mathds{N}'\in\mathcal{C}&&\\
&\mathds{N}' \succeq 0, &&\\
&\sum_{a}N'(a|x)=\lambda\one, &&\forall x.
\end{array}
\end{equation}
Introducing hermitian operators $\gamma_x$ as Lagrange multiplies, we can write the Lagrangian with respect to $\mathds{M}$ as
\begin{align}
\mathcal{L}(\lambda,\mathds{N}', \mathds{A},\mathds{B},\{\gamma_{x}\})&=\lambda - \langle\mathds{M}+\mathds{N}'|\mathds{A}\rangle-\langle\mathds{N}'|\mathds{B}\rangle -\sum_{x}\Big \langle \lambda\one-\sum_{a}N'(a|x),\gamma_{x}\Big \rangle\\
&=-\langle\mathds{M}|\mathds{A}\rangle+\lambda(1-\sum_{x}\Tr{\gamma_{x}})+\sum_{a,x}\Big \langle N'(a|x),\gamma_{x}-\beta_{a,x}-\alpha_{a,x}\Big \rangle.
\end{align}
where the dual variables satisfy $ \mathds{A}\in \set{C}^*$, $ \mathds{B}\succeq 0$, and the elements of $\mathds{A}$  and $\mathds{B}$ are $\{\alpha_{a,x}\}$ and $\{\beta_{a,x}\}$ respectively.   Then we write the dual function as,
\begin{align}
g(\mathds{A},\mathds{B},\{\gamma_{x}\})&=\min_{\lambda,\mathds{N}'}\mathcal{L}(\lambda,\mathds{N}',\mathds{A},\mathds{B},\{\gamma_{x}\})\\
&=-\langle\mathds{M}|\mathds{A}\rangle+\min_{\lambda,\mathds{N}'}\bigg(\lambda(1-\sum_{x}\Tr{\gamma_{x}})+\sum_{a,x}\Big \langle N'(a|x),\gamma_{x}-\beta_{a,x}-\alpha_{a,x}\Big \rangle\bigg)
\end{align}
since $g$ is linear function and a linear function is bounded below only when it is identical zero. Thus, $g=-\infty$ (trivial bound), except only when the following two conditions hold,
\[
\begin{cases} 
\sum_{x}\Tr{\gamma_{x}}=1& \\
\gamma_{x}-\beta_{a,x}-\alpha_{a,x}=0 &\forall a,x,  
\end{cases}
\]
in which cases, $g(\mathds{A},\mathds{B},\{\gamma_{x}\})=-\langle\mathds{M}|\mathds{A}\rangle$.	
So we can write the dual problem to define the upper bound of dual fucntion as follows,
\begin{equation}
\begin{array}{ll@{}ll}
\text{maximize}  & \displaystyle -\langle\mathds{M}|\mathds{A}\rangle&\\
\text{subject to}& \displaystyle \mathds{A}\in  \set{C}^* &\\
&\mathds{B}\succeq 0,  &&\\
& 	\gamma_{x}-\beta_{a,x}-\alpha_{a,x}=0  &&\forall a,x,\\
&\sum_{x}\Tr{\gamma_{x}}=1,
&&\gamma_{x}=\gamma^{\dagger}_{x}.\\
\end{array}
\end{equation}
we can get rid of the dual variable $\mathds{B}$ by combining the second and third constriant as the condition $\gamma_{x}-\alpha_{a,x}\geq 0\quad\forall a,x$, becasue $\mathds{B}$ is only the constraint of dual variables, then the above problem reduces to,
\begin{equation}\label{sdp}
\begin{array}{ll@{}ll}
\text{maximize}  & \displaystyle -\langle\mathds{M}|\mathds{A}\rangle&\\
\text{subject to}& \displaystyle \mathds{A}\in  \set{C}^* &\\
& \gamma_{x}-\alpha_{a,x}\geq 0  &&\forall a,x,\\
&\sum_{x}\Tr{\gamma_{x}}=1,
&&\gamma_{x}=\gamma^{\dagger}_{x}.\\
\end{array}
\end{equation}
Define a new variable $\mathds{W}$, such that its element  $\omega_{a,x}:=\gamma_{x}-\alpha_{a,x}$, and we see that,
\[
-\langle\mathds{M}|\mathds{A}\rangle=\langle\mathds{M}|\mathds{W}\rangle-1=\sum_{a,x}\Big \langle M(a|x), \omega_{a,x} \Big \rangle-1,
\] 
then we can rewrite the dual problem as, 
\begin{equation}\label{sdp2}
\begin{array}{ll@{}ll}
\text{maximize}  & \displaystyle\langle\mathds{M}|\mathds{W}\rangle-1&\\
\text{subject to}& \displaystyle   
\mathds{A} \in\set{C}^*&\\
&\mathds{W} \succeq  0 &\\
&\sum_{x}\Tr{\gamma_{x}}=1, &&\gamma_{x}=\gamma^{\dagger}_{x}.\\
\end{array}
\end{equation}
where $\mathds{W} \succeq  0$ is equivalent to, $\gamma_{x}-\alpha_{a,x}\geq 0 \forall a,x$.
To see that the strong duality holds, that is to say,  the optimal value of the dual is equal to the optimal value of the primal problem, let's choose $\alpha_{a,x}=\frac{1}{2|\set{X}|\Tr{\one}}\one,  \forall a,x$, i.e., $\mathds{A} \succ 0$ (thus $\mathds{A}$ is in the interior of $\set{C}^*$), and $\gamma_{x}:=2\alpha_{a,x}$, we then see that $\gamma_{x}-\alpha_{a,x}= \alpha_{a,x} > 0\quad  \forall a,x$ and $\sum_{x}\Tr{\gamma_{x}}=1$. These choices can be noticed to strictly satisfy the conditions~(\ref{sdp2}). So Slater's theorem ensures that the strong duality holds.

\begin{theorem}\label{robustness}
	For any PMD, with its robustness related to guessing games, it satisfies
	\begin{equation}\label{maineq}
	1+\mathfrak{R}( \{M(a|x)\}_{a,x})=\max_{\{\rho_{a,x} \}} \frac{\pguess(M(a|x);\rho_{a,x})}{\pguess^{\text{simple}}(\rho_{a,x})}\;,
	\end{equation}
\end{theorem}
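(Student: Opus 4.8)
The plan is to combine the strong-duality statement just established with two complementary inequalities, so that the dual optimum $1+\mathfrak{R}(\mathds{M})$ is squeezed exactly onto the guessing-game advantage. First I would unpack the dual program~(\ref{sdp2}) in operational terms. Writing $\omega_{a,x}=\gamma_x-\alpha_{a,x}$ for the entries of $\mathds{W}$, the normalization $\sum_x\Tr{\gamma_x}=1$ together with $\sum_a M(a|x)=\one$ gives $\langle\mathds{M}|\mathds{A}\rangle=1-\langle\mathds{M}|\mathds{W}\rangle$, so the dual objective is $\langle\mathds{M}|\mathds{W}\rangle-1=\sum_{a,x}\Tr{M(a|x)\,\omega_{a,x}}-1$. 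The essential translation is the cone constraint $\mathds{A}\in\set{C}^*$: since every simple PMD $\widetilde{M}$ obeys $\sum_a\widetilde{M}(a|x)=\one$, one has $\sum_{a,x}\Tr{\gamma_x\widetilde{M}(a|x)}=\sum_x\Tr{\gamma_x}=1$ \emph{independently} of the choice of $\gamma_x$, and hence $\langle\mathds{A}|\mathds{F}\rangle\ge0$ for all simple $\mathds{F}$ becomes $\sum_{a,x}\Tr{\omega_{a,x}\widetilde{M}(a|x)}\le1$ for all simple $\widetilde{M}$. By Lemma~\ref{lemma:simple-all-equiv} and Theorem~\ref{th:main}, applying any free operation to a fixed simple device reaches exactly all simple devices (with matching alphabets), so $\max_{\widetilde{M}\ \text{simple}}\sum_{a,x}\Tr{\omega_{a,x}\widetilde{M}(a|x)}=\pguess^{\text{simple}}(\omega_{a,x})$. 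Reading $\{\omega_{a,x}\}$ as a (sub-normalized) ensemble of a post-information game with $\sH^R=\sH^Q$, $\set{W}=\set{X}$, $\set{Z}=\set{A}$, strong duality then yields
\[
1+\mathfrak{R}(\mathds{M})=\max\Big\{\textstyle\sum_{a,x}\Tr{M(a|x)\,\omega_{a,x}}:\ \omega_{a,x}\ge0,\ \pguess^{\text{simple}}(\omega_{a,x})\le1\Big\}.
\]

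Next I would prove $1+\mathfrak{R}(\mathds{M})\ge\pguess(M;\rho)/\pguess^{\text{simple}}(\rho)$ for \emph{every} ensemble $\{\rho_{a,x}\}$, which uses only the primal. From the definition~(\ref{eq:defrb}) I extract the optimal decomposition $M(a|x)=(1+r)\widetilde{M}(a|x)-r\,N(a|x)$ with $r=\mathfrak{R}(\mathds{M})$, $\widetilde{M}$ simple, and $N$ a genuine PMD. Because the free operations of Eq.~(\ref{eq:strategies}) are linear in the input PMD and send PMDs to PMDs, the optimal guessing strategy $\mc{T}^\star$ (which, comparing Eq.~(\ref{eq:optimal-guess}) with Eq.~(\ref{eq:strategies}), is itself such a free operation) satisfies $\mc{T}^\star(M)=(1+r)\mc{T}^\star(\widetilde{M})-r\,\mc{T}^\star(N)$, where $\mc{T}^\star(\widetilde{M})$ is again simple and $\mc{T}^\star(N)$ is again a PMD. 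Pairing with $\rho\ge0$ and dropping the nonnegative term $r\sum_{a,x}\Tr{\rho_{a,x}\,[\mc{T}^\star(N)](a|x)}\ge0$ gives $\pguess(M;\rho)\le(1+r)\,\pguess^{\text{simple}}(\rho)$, i.e. the claimed ratio bound.

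Finally I would establish the matching lower bound using the dual optimizer. Let $\{\omega^\star_{a,x}\}$ attain the displayed maximum, so $\sum_{a,x}\Tr{M(a|x)\,\omega^\star_{a,x}}=1+\mathfrak{R}(\mathds{M})$ and $\pguess^{\text{simple}}(\omega^\star)\le1$. Reading $\omega^\star$ as the ensemble of a post-information game, the \emph{direct} strategy—feed the post-information $x$ straight into the program port and report the raw outcome—is one admissible $\mc{T}$ in Eq.~(\ref{eq:optimal-guess}), so $\pguess(M;\omega^\star)\ge\sum_{a,x}\Tr{M(a|x)\,\omega^\star_{a,x}}=1+\mathfrak{R}(\mathds{M})$. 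Combining this with $\pguess^{\text{simple}}(\omega^\star)\le1$ gives $\pguess(M;\omega^\star)/\pguess^{\text{simple}}(\omega^\star)\ge1+\mathfrak{R}(\mathds{M})$, which together with the previous paragraph forces the equality~(\ref{maineq}).

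The main obstacle is the bookkeeping in the first step: correctly recognizing the cone-dual constraint $\mathds{A}\in\set{C}^*$ (with the trace normalization) as the single operational condition $\pguess^{\text{simple}}(\omega)\le1$, and justifying that the optimization defining $\pguess^{\text{simple}}$ collapses to $\max_{\widetilde{M}\ \text{simple}}\sum_{a,x}\Tr{\omega_{a,x}\widetilde{M}(a|x)}$—which is precisely where Lemma~\ref{lemma:simple-all-equiv} (free interconvertibility of all simple devices) is indispensable. Once that dictionary is fixed, the two inequalities are short; the only remaining care is that the SDP objective uses the direct strategy while the theorem uses the fully optimized $\pguess$, and the squeeze closes exactly because the direct strategy lower-bounds $\pguess$ while the robustness decomposition upper-bounds the ratio.
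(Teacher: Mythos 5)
Your proposal is correct and follows essentially the same route as the paper's proof: the upper bound on the ratio comes from the primal decomposition $M=(1+r)F-rN$ together with linearity of the optimal free processing, and the matching lower bound comes from taking the optimal dual variables $\omega_{a,x}=\gamma_x-\alpha_{a,x}$ as the witness ensemble, using the identity strategy to lower-bound $\pguess(M;\omega)$ and dual feasibility plus $\sum_x\Tr{\gamma_x}=1$ to get $\pguess^{\text{simple}}(\omega)\le 1$. The only cosmetic difference is that you package the dual constraint explicitly as the operational condition $\pguess^{\text{simple}}(\omega)\le 1$ before invoking it, whereas the paper carries the same bound through its chain of inequalities.
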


\begin{proof}
	We first show the right hand side is smaller than or equal to the left hand side for all possible ensembles, then we show a special choosing ensemble satisfies that the right hand side is greater than or equal to the left hand side, which can be seen as the optimal ensemble.
	
	According to the definition of the general robustness of $\{M(a|x)\}_{a,x}$, one can write $M(a|x)=(1+r)F(a|x)-rN(a|x)$ for some $\mathds{F}\in \set{F}$ with elements as $F(a|x)$, where $r=\mathfrak{R}( \{M(a|x)\}_{a,x})$. By using the same notations as shown in Theorem~\ref{th:main}, and according to Lemma~\ref{lemma:simple-all-equiv}, we obtain,
	\begin{align}
	&\pguess^{\text{simple}}(\rho_{a,x})=\max_{\mathcal{T}\in\mathscr{T}}\sum_{a,x}\Big \langle \mathcal{T}(F)(a|x), \rho_{a,x}, \Big \rangle
	\end{align}
	and,
	\begin{align}
	\pguess(M(a|x);\rho_{a,x})&=\max_{\mathcal{T}\in\mathscr{T}}\sum_{a,x}\Big \langle \mathcal{T}(M)(a|x), \rho_{a,x} \Big \rangle\\
	&=\sum_{a,x}\Big \langle \mathcal{T}^*(M)(a|x), \rho_{a,x} \Big \rangle\;\\
	&=(1+r)\sum_{a,x}\Big \langle \mathcal{T}^*(F)(a|x), \rho_{a,x} \Big \rangle-r\sum_{a,x}\Big \langle \mathcal{T}^*(N)(a|x), \rho_{a,x} \Big \rangle\\
	&\le (1+r)\max_{\mathcal{T}\in\mathscr{T}}\sum_{a,x}\Big \langle \mathcal{T}(F)(a|x), \rho_{a,x} \Big \rangle\\
	&=(1+r)\pguess^{\text{simple}}(\rho_{a,x}),
	\end{align}
	where the third equality holds because the optimized $\mathcal{T}^*$ is linear.
	
	Next we choose an ensemble $\rho_{a,x}=\omega_{a,x}$ (up to normalization constriant of the ensemble) satisfying the constraint in
	where we consider the set of optimal $\{\omega_{a,x}\}$ appear in the dual problem~(\ref{sdp2}), under this ensemble, we obtain, 
	\begin{align}
	\frac{\pguess(M(a|x);\rho_{a,x})}{\pguess^{\text{simple}}(\rho_{a,x})}&=\frac{\max_{\mathcal{T}\in\mathscr{T}}\sum_{a,x}\Big \langle \mathcal{T}(M)(a|x), \omega_{a,x} \Big \rangle}{\max_{\mathcal{T}\in\mathscr{T}}\sum_{a,x}\Big \langle \mathcal{T}(F)(a|x), \omega_{a,x} \Big \rangle}\\
	&\ge \frac{\sum_{a,x}\Big \langle M(a|x), \omega_{a,x} \Big \rangle}{\max_{\mathcal{T}\in\mathscr{T}}\sum_{a,x}\Big \langle \mathcal{T}(F)(a|x), \omega_{a,x} \Big \rangle}\\
	&\ge \frac{\sum_{a,x}\Big \langle M(a|x), \omega_{a,x} \Big \rangle}{\max_{\mathcal{T}\in\mathscr{T}}\sum_{a,x}\Big \langle \mathcal{T}(F)(a|x),\gamma_{x} \Big \rangle}\\ 
	&= \frac{\sum_{a,x}\Big \langle M(a|x), \omega_{a,x} \Big \rangle}{\sum_{x}\Big \langle \one,  \gamma_{x}\Big \rangle} \\
	&=1+\mathfrak{R}( \{M(a|x)\}_{a,x})\label{eq:linearity}
	\end{align}
	where the first inequality holds because of maximization over all possible $\mathcal{T}$, the second inequality holds because of the constraint  $\gamma_{x}-\omega_{a,x}=\alpha_{a,x}$  and $\mathds{A} \in \set{C}^*$ in~(\ref{sdp2}), which brings the fact that $\sum_{a,x}\Big \langle \mathcal{T}(F)(a|x),\gamma_{x} -\omega_{a,x}\Big \rangle  \ge 0$, and the last equality holds because we have that $\sum_{a}\mathcal{T}(F)(a|x)=\one $ and also $\sum_{x}\Tr{\gamma_{x}}=1$, which concludes the proof.
\end{proof}

\end{document}